\DeclareMathOperator{\lca}{lca}
\providecommand{\keywords}[1]{\textbf{\textit{Keywords: }} #1}
\newtheorem{thm}{Theorem}
\newtheorem{lem}{Lemma}
\newtheorem{defn}{Definition}
\title[Undirected Fitch Graphs]{A Short Note on Undirected Fitch Graphs}
\author[ Gei{\ss}, Hellmuth, Long, and Stadler]{Manuela Gei{\ss} \and Marc Hellmuth \and Yangjing Long  \and Peter F.\ Stadler}
\address{Bioinformatics Group, Department of Computer Science, 
				 Universit{\"a}t Leipzig, H{\"a}rtelstrasse 16-18, D-04107 Leipzig,
				 Germany}
\email{manuela@bioinf.uni-leipzig.de}
\address{Department of Mathematics and
  Computer Science, University of Greifswald, Walther-Rathenau-Stra{\ss}e
  47, D-17487 Greifswald, Germany;
Center for Bioinformatics, Saarland University, Building E 2.1, P.O.\ Box
  151150, D-66041 Saarbr{\"u}cken, Germany}
\email{mhellmuth@mailbox.org}
\address{Department of Mathematics and
  Computer Science, University of Greifswald, Walther-Rathenau-Stra{\ss}e
  47, D-17487 Greifswald, Germany}
\email{yjlong@sjtu.edu.cn}
\address{Bioinformatics Group, Department of Computer Science, 
 Universit{\"a}t Leipzig, H{\"a}rtelstrasse 16-18, D-04107 Leipzig,
 Germany;
Interdisciplinary Center for
  Bioinformatics, the German Centre for Integrative Biodiversity Research 
  (iDiv) Halle-Jena-Leipzig the Competence Center for Scalable Data
  Services and Solutions Dresden-Leipzig the Leipzig Research Center for 
  Civilization Diseases, and the Centre for Biotechnology and Biomedicine at 
  Leipzig University; the Max Planck Institute for Mathematics in
  the Sciences, Leipzig, Germany; the Institute for Theoretical Chemistry, 
  University of Vienna, Vienna, Austria; the Center of noncoding RNA in
  Health and Technology (RTH) at the University of
  Copenhagen; and the Santa Fe Institute, Santa Fe, NM}
\email{studla@bioinf.uni-leipzig.de}
\thanks{This work is supported in part by the BMBF-funded project ``Center for
				RNA-Bioinformatics'' (031A538A, de.NBI/RBC)}
\subjclass{05C75, 05C05, 92B10}
\keywords{Labeled trees, forbidden subgraphs, phylogenetics, xenology, Fitch graph}
\date{\today}
\begin{document}

\maketitle
\begin{abstract}
  The symmetric version of Fitch's xenology relation coincides with class
  of complete multipartite graph and thus cannot convey any non-trivial
  phylogenetic information.
\end{abstract}




\sloppy

\bigskip \bigskip

\emph{Fitch graphs} \cite{Geiss:17a} form a class of directed graphs that
is derived from rooted, $\{0,1\}$-edge-labeled trees $T$ in the following
manner: The vertices of the Fitch graph are the leaves of $T$.  Two
distinct leaves $x$ and $y$ of $T$ are connected by an arc $(x,y)$ from $x$
to $y$ if and only if there is at least one edge with label $1$ on the
(unique) path in $T$ that connects the least common ancestor $\lca(x,y)$ of
$x$ and $y$ with $y$. Fitch graphs model ``xenology'', an important
  binary relation among genes introduced by Walter M.\ Fitch
  \cite{Fitch:00}. Interpreting $T$ as a phylogenetic tree and $1$-edges
as horizontal gene transfer events, the arc $(x,y)$ in the Fitch graph
encodes the fact that $y$ is xenologous with respect to $x$. The directed
Fitch graphs are the topic of Ref.\ \cite{Geiss:17a}, which among other
results proves a characterization in terms of eight forbidden induced
subgraphs.

It is natural to consider the symmetrized version of this relationship,
i.e., to interpret $\{x,y\}$ as a xenologous pair whenever the evolutionary
history separated $x$ and $y$ by at least one horizontal transfer event. In
mathematical terms, this idea is captured by
\begin{defn}
  \label{def:uFg}
  Let $T$ be a rooted tree with leaf set $X$ and let
  $\lambda:E(T)\to\{0,1\}$. Then the \emph{undirected Fitch graph} $F$
  explained by $(T,\lambda)$ has vertex set $X$ and edges $\{x,y\}\in E(F)$
  if and only if the (unique) path from $x$ to $y$ in $T$ contains at least
  one edge $e$ with $\lambda(e)=1$. A graph $F$ is an undirected Fitch
  graph if and only if it is explained in this manner by some edge-labeled
  rooted tree $(T,\lambda)$.
\end{defn}
 
Undirected Fitch graphs are closely related to their directed
counterparts. Since the path $\wp$ connecting two leaves $x$ and $y$ is
unique and contains their last common ancestor $\lca(x,y)$, there is a
1-edge along $\wp$ if and only if there is 1-edge on the path between $x$
and $\lca(x,y)$ or between $\lca(x,y)$ and $y$. The undirected Fitch graph
is therefore the underlying undirected graph of the directed Fitch graph,
i.e., it is obtained from the directed version by ignoring the direction of
the arcs.

The undirected Fitch graphs form a heritable family, i.e., if $F$ is a an
undirected Fitch graph, so are all its induced subgraphs. This is an
immediate consequence of the fact that directed Fitch graphs are a heritable
family of digraphs \cite{Geiss:17a}. The fact can also be obtained directly
by considering the restriction of $T$ to a subset of leaves. This obviously
does not affect the paths or their labeling between the remaining vertices.

Clearly $F$ does not depend on which of the non-leaf vertices in $T$ is the
root. Furthermore, a vertex $v$ with only two neighbors and its two
incident edges $e'$ and $e''$ can be replaced by a single edge $e$. The new
edge is labeled $\lambda(e)=0$ if both $\lambda(e')=\lambda(e'')=0$, and
$\lambda(e)=1$ otherwise. These operations do not affect the undirected
Fitch graph. Hence, we can replace the rooted tree $T$ by an unrooted tree
in Def.~\ref{def:uFg} and assume that all non-leaf edges have at least
degree $3$. To avoid trivial cases we assume throughout that $T$ has at
least two leaves and hence a Fitch graph has at least two vertices.

\begin{lem} 
  \label{lem:size3} 
  If $G$ is an undirected Fitch graph, then $F$ does not contain
  $K_1\cupdot K_2$ as an induced subgraph. In particular every undirected
  Fitch graph is a complete multipartite graph.
\end{lem}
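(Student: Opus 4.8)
The plan is to prove the forbidden-subgraph statement directly from the tree structure and then deduce the complete-multipartite claim from a classical characterization. First I would argue by contradiction: suppose an undirected Fitch graph $F$ explained by $(T,\lambda)$ contains $K_1\cupdot K_2$ as an induced subgraph, witnessed by vertices $x,y,z$ with $\{x,y\}\in E(F)$ but $\{x,z\},\{y,z\}\notin E(F)$. Let $w$ be the median of $x,y,z$ in $T$, i.e.\ the unique vertex lying on all three pairwise paths; then the $x$--$y$ path is the concatenation of the $x$--$w$ path and the $w$--$y$ path, and analogously for the other two pairs. Since the $x$--$z$ path carries no $1$-edge, neither does its sub-path from $x$ to $w$; since the $y$--$z$ path carries no $1$-edge, neither does its sub-path from $y$ to $w$. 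Hence the $x$--$y$ path consists of $0$-edges only, so $\{x,y\}\notin E(F)$, a contradiction. (Equivalently, in a tree the $x$--$y$ path is contained in the union of the $x$--$z$ and $z$--$y$ paths, and both of the latter are all-zero.)

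For the ``in particular'' clause I would invoke the standard fact that a graph is complete multipartite if and only if its complement is a disjoint union of complete graphs, which holds exactly when the complement has no induced $P_3$; since $K_1\cupdot K_2=\overline{P_3}$, being complete multipartite is equivalent to being $(K_1\cupdot K_2)$-free, and the first part supplies precisely this property. If a self-contained argument is preferred, one checks directly that non-adjacency is an equivalence relation on $V(F)$ when $F$ is $(K_1\cupdot K_2)$-free (transitivity is exactly the forbidden configuration), and its classes are the parts of the multipartition.

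The whole argument is short once the tree is exploited; the only point requiring care, and the only place the hypothesis on $z$ is actually used, is the decomposition of the $x$--$y$ path through the median vertex $w$ (or the equivalent tree-path containment claim). I do not expect a genuine obstacle here.
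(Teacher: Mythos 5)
Your proof is correct, but it takes a different route from the paper for the main claim. The paper first reduces to the three-leaf case via heritability of undirected Fitch graphs (established earlier by restricting the tree to a leaf subset), and then exhaustively enumerates the four $\{0,1\}$-labelings of the star $S_3$, observing that only $\overline{K_3}$, $P_3$ and $K_3$ arise, so $K_1\cupdot K_2$ is never realized. You instead argue directly in an arbitrary explaining tree: taking the median $w$ of $x,y,z$, the all-zero $x$--$z$ and $y$--$z$ paths force the $x$--$w$ and $w$--$y$ subpaths to be all-zero, hence the $x$--$y$ path is all-zero, contradicting $\{x,y\}\in E(F)$. Your argument is self-contained and does not need the heredity property at all, which is a mild gain in economy; the paper's enumeration buys the extra (and later unused) information of exactly which three-vertex graphs are realizable, and fits the paper's general strategy of reasoning about small trees. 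For the ``in particular'' clause both you and the paper invoke the same classical characterization of complete multipartite graphs as the $(K_1\cupdot K_2)$-free graphs (the paper by citation, you with a correct sketch via the complement or via non-adjacency being an equivalence relation). The median-vertex decomposition you rely on is standard and valid in any tree, so there is no gap.
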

\begin{proof}
  There is a single unrooted tree with three leaves, namely the star $S_3$,
  which admits four non-isomorphic $\{0,1\}$-edge labelings defined by the
  number $N$ of 1-edges. The undirected Fitch graphs $F_N$ are easily
  obtained. In the absence of 1-edges, $F_0=\overline{K_{3}}$ is edge-less.
  For $N=2$ and $N=3$ there is 1-edge along the path between any two
  leaves, i.e., $F_2=F_3=K_3$. For $N=1$ one leaf is connected to the other
  two by a path in $S_3$ with an 1-edge; the path between the latter two
  leaves consists of two 0-edges, hence $F_1=P_3$, the path of length two.
  Hence, only three of the four possible undirected graphs on three
  vertices can be realized, while $K_1\cupdot K_2$ is not an undirected
  Fitch graph.  By heredity, $K_1\cupdot K_2$ is therefore a forbidden
  induced subgraph for the class of undirected Fitch graphs. Finally, it is
  well known that the class of graphs that do not contain $K_1\cupdot K_2$
  as an induced subgraph are exactly the complete multipartite graphs, see
  e.g.\ \cite{Zverovich:99}.
\end{proof}

In order to show that forbidding $K_1\cupdot K_2$ is also sufficient, we
explicitly construct the edge-labeled trees necessary to explain complete
multipartite graphs. We start by recalling that each complete multipartite
graph $K_{n_1,\dots, n_k}$ is determined by its independent sets
$V_1,\dots,V_k$ with $|V_i|=n_i$ for $1\le i\le k$. By definition,
$\{x,y\}\in E(K_{n_1,\dots, n_k})$ if and only if $x\in V_i$ and $y\in V_j$
with $i\ne j$. In particular, therefore, $K_{n_1,\dots, n_k}$ with at least
two vertices is connected if and only if $k\ge 2$.  The complete
$1$-partite graphs are the edge-less graphs $\overline K_{n}$.

Since $K_1\cupdot K_2$ is an induced subgraph of the path on four vertices
$P_4$, any graph $G$ that does not contain $K_1\cupdot K_2$ as an induced
subgraph must be $P_4$-free, i.e., a cograph \cite{Corneil:81}. Cographs
are associated with vertex-labeled trees known as cotrees, which in turn
are a special case of modular decomposition trees \cite{Gallai:67}. The
cotrees of connected multipartite graphs have a particularly simple shape,
illustrated without the vertex labels in Fig.~\ref{fig:cmgtree}.  The
cotree has a root labeled ``1'' and all inner vertices labeled ``0''. Here
we do not need the connection between cographs and their cotrees,
however. Therefore we introduce these trees together with an edge-labeling
that is useful for our purposes in the following
\begin{defn} 
  For $k=1$, $T[n]$ is the star graph $S_n$ with $n$ leaves.  
  For $k\ge 2$, the tree $T[n_1,\dots,n_k]$ has a root $r$ with $k$
  children $c_i$, $1\le i\le k$. The vertex $c_i$ is a leaf if
  $|V_i|=n_i=1$ and has exactly $n_i$ children that are leaves if
  $|V_i|=n_i\ge 2$. 
  \newline 
  For $k=1$ all edges $e$ of $T[n]$ are labeled $\lambda^*(e)=0$.  For
  $k\ge 2$ we set $\lambda^*(\{r,c_i\})=1$ for $1\le i\le k$ and
  $\lambda^*(e)=0$ for all edges not incident to the root.
  \label{def:FT}
\end{defn}

\begin{figure}
  \begin{center}
    \includegraphics[width=0.7\textwidth]{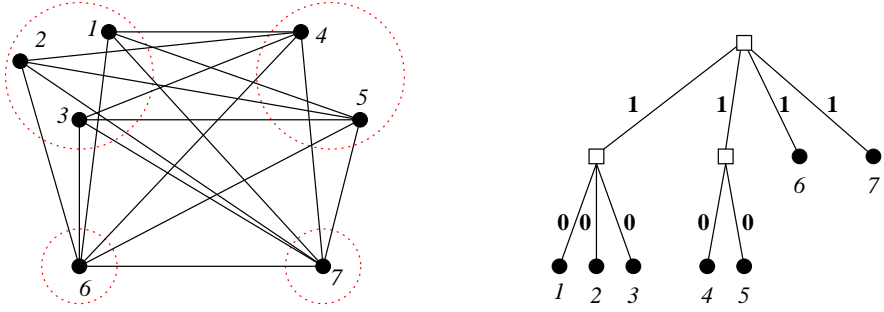}
  \end{center}
  \caption{The complete multipartite graph $K_{3,2,1,1}$ is the Fitch graph
    explained by the tree $T[3,2,1,1]$ with edge labeling $\lambda^*$ shown
    with bold numbers \textbf{0} and \textbf{1}.}
  \label{fig:cmgtree}
\end{figure}

Now we can prove our main result:
\begin{thm}
  \label{thm:char}
  A graph $G$ is an undirected Fitch graph if and only if it is a complete
  multipartite graph. In particular, $K_{n_1,\dots,n_k}$ is explained by 
  $(T[n_1,\dots,n_k],\lambda^*)$. 
\end{thm}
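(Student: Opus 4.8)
The plan is to combine Lemma~\ref{lem:size3} with an explicit verification of the announced construction, so that essentially all the work has already been done. Lemma~\ref{lem:size3} gives the ``only if'' direction for free: every undirected Fitch graph avoids $K_1\cupdot K_2$ as an induced subgraph and is therefore a complete multipartite graph. Hence it remains to prove the ``if'' direction, and more precisely the concrete claim that every complete multipartite graph $K_{n_1,\dots,n_k}$ is explained by $(T[n_1,\dots,n_k],\lambda^*)$ of Def.~\ref{def:FT}.

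I would first dispose of the degenerate case $k=1$. Here $T[n]$ is the star $S_n$ with all edges labeled $0$ by $\lambda^*$, so no leaf-to-leaf path carries a $1$-edge, and the explained graph is the edge-less graph $\overline K_{n}$, which is exactly the complete $1$-partite graph on $n$ vertices.

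For $k\ge 2$ I would check the defining condition of Def.~\ref{def:uFg} by a short case distinction on two distinct leaves $x,y$ of $T[n_1,\dots,n_k]$, using that the edges incident to the root $r$ are precisely the $1$-edges while all other edges are $0$-edges. If $x$ and $y$ lie in the same part $V_i$, then necessarily $n_i\ge 2$ (a part of size one collapses to the single leaf $c_i$), the vertex $c_i$ is their common ancestor, and the path from $x$ to $y$ is $x,c_i,y$, consisting of two $0$-edges; hence $\{x,y\}\notin E$. If $x\in V_i$ and $y\in V_j$ with $i\ne j$, then the path from $x$ to $y$ necessarily passes through $r$ and therefore traverses the $1$-edges $\{r,c_i\}$ and $\{r,c_j\}$; hence $\{x,y\}\in E$. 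Since this adjacency pattern is exactly that of $K_{n_1,\dots,n_k}$, the pair $(T[n_1,\dots,n_k],\lambda^*)$ explains $K_{n_1,\dots,n_k}$. Together with Lemma~\ref{lem:size3} this proves that the undirected Fitch graphs are precisely the complete multipartite graphs.

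I do not expect any real obstacle: the ``only if'' direction is already established, and the ``if'' direction is a direct verification of the construction. The only points that require a little care are keeping the $k=1$ case separate and remembering that ``$x,y$ in the same part $V_i$'' forces $n_i\ge 2$, so that the vertex $c_i$ is genuinely an inner vertex of the tree and the path $x,c_i,y$ is well defined.
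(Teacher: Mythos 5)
Your proposal is correct and follows essentially the same route as the paper: the ``only if'' direction is delegated to Lemma~\ref{lem:size3}, the case $k=1$ is handled separately, and for $k\ge 2$ the construction $(T[n_1,\dots,n_k],\lambda^*)$ is verified directly by checking that leaves in the same part are joined only by $0$-edge paths while leaves in different parts are joined by a path through the root containing $1$-edges. The paper phrases this as a partition of the leaf set into independent sets $L_i$ rather than a pairwise case distinction, but the content is identical.
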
 
\begin{proof}
  Lemma \ref{lem:size3} implies that an undirected Fitch graph is a
  complete multipartite graph. To show the converse, we fix an arbitrary 
  complete multipartite graph $G = K_{n_1,\dots,n_k}$ and find an 
  edge-labeled rooted tree $(T, \lambda^*)$ that explains $G$. 

  For $k=1$ it is trivial that $(T[n],\lambda^*)$ explains
  $\overline{K_n}$.

  For $k\ge 2$ consider the tree $T[n_1,\dots n_k]$ with edge labeling
  $\lambda^*$ and let $F$ be the corresponding Fitch graph.  The leaf-set
  of $T[n_1,\dots n_k]$ partitioned into exactly $k$ subsets $L_1$, \dots,
  $L_k$ defined by (a) singletons adjacent to the root and (b) subsets
  comprising at least two leaves adjacent to the same child $c_i$ of the
  root. Furthermore, we can order the leaf sets so that $|L_i|=n_i$.  By
  construction, all vertices within a leaf set $L_i$ are connected by a
  path that does not run through the root and thus, contains only 0-edges,
  if $|L_i|>1$ and no edge, otherwise.  The $L_i$ are independent sets in
  $F$. On the other hand any two leaves $x\in L_i$ and $y\in L_j$ with
  $i\ne j$ are connected only by path through the root, which contains two
  1-edges. Thus $x$ and $y$ are connected by an edge in $F$. Thus $F$ is a
  complete multipartite graph of the form
  $K_{|L_1|,\dots,|L_k|}=K_{n_1,\dots,n_k}$. Since $K_{n_1,\dots,n_k}$ is
  explained by $(T[n_1,\dots,n_k],\lambda^*)$ for all $n_i\ge 1$ and all
  $k\ge 2$, and $\overline{K_n}$ is explained by $(T[n],\lambda^*)$, we
  conclude that every complete multipartite graph is a Fitch graph.
\end{proof}

Complete multipartite graphs $G=(V,E)$ obviously can be recognized in
$O(|V|^2)$ time e.g., by checking that its complement is a disjoint union
of complete graphs), and even in $O(|V|+|E|)$ time by explicitly
constructing its modular decomposition tree \cite{McConnell:99}. Given the
tree $T[n_1,\dots,n_k]$, the canonical edge labeling $\lambda^*$ is then
assigned in $O(|V|)$ time.

A tree $(T,\lambda)$ that explains a Fitch graph $F$ is \emph{minimum} if
it has the smallest number of vertices among all trees that explain $F$.
In this case, $(T,\lambda)$ is also \emph{least-resolved}, i.e., the
contraction of any edge in $(T,\lambda)$ results in a tree that does not
explain $F$.  Not surprisingly, the tree $T[n_1,\dots,n_k]$ is almost
minimum in most, and minimum in some cases: Since the vertices of the Fitch
graph must correspond to leaves of the tree, $T[n_1,\dots,n_k]$ is
necessarily minimum whenever it is a star, i.e., for $T[n]$ and
$T[1,\dots,1]$. In all other cases, its only potentially ``superfluous''
part is its root.  Indeed, exactly one of the edges connecting the root
with a non-leaf neighbor can be contracted without changing the
corresponding Fitch graph. It is clear that this graph is minimal: The leaf
sets $L_i$ must be leaves of an induced subtree without an intervening
1-edge. Having all vertices of $L_i$ adjacent to the same vertex is
obviously the minimal choice. Since the $L_i$ must be separated from all
other leaves by a 1-edge, at least one neighbor of $c_i$ must be a
1-edge. Removing all leaves incident to a 0-edge results in a tree with at
least $k$ vertices that must contain at least $k-1$ 1-edges, since every
path between leaves in this tree must contain a 1-edge. The contraction of
exactly one of the $k$ 1-edges incident to the root $r$ in
$T[n_1,\dots,n_k]$ indeed already yields a minimal tree. In general, the
minimal trees are not unique, see Fig.\ \ref{fig:nonU-l}.

\begin{figure}
  \begin{center}
    \includegraphics[width=1.0\textwidth]{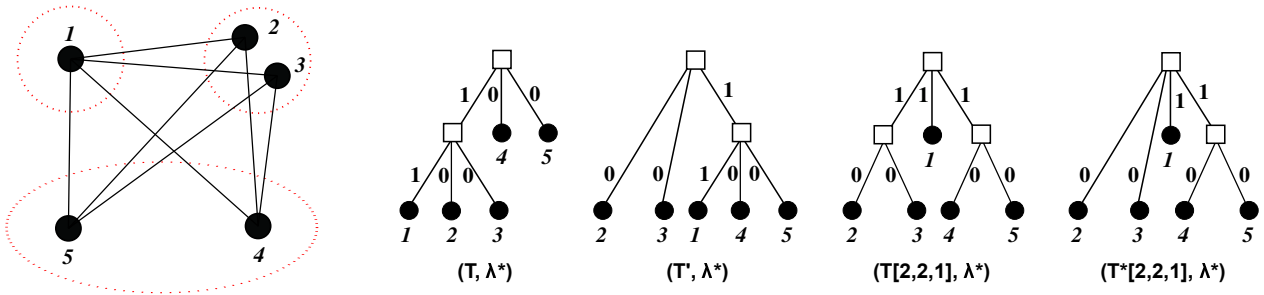}
  \end{center}
  \caption{The non-isomorphic trees $(T,\lambda^*)$, $(T',\lambda^*)$
    $(T[2,2,1],\lambda^*)$, and $(T^*[2,2,1],\lambda^*)$ all explain the
    same complete multipartite graph $K_{2,2,1}$. Three of these trees have
    the smallest possible number (7) of vertices and thus are
    minimal. These can be obtained from the tree
      $(T[2,2,1],\lambda^*)$ specified in Definition~\ref{def:uFg} by
      contracting of one of its inner 1-edge and possibly re-routing the
      resulting tree.
  }
  \label{fig:nonU-l}
\end{figure}

The practical implication of Thm.~\ref{thm:char} in the context of
phylogenetic combinatorics is that the mutual xenology relation cannot
convey any interesting phylogenetic information: Since the undirected Fitch
graphs are exactly the complete multipartite graphs, which in turn are
completely defined by their independent sets, the only insight we can gain
by considering mutual xenology is the identification of the maximal subsets
of taxa that have not experienced any horizontal transfer events among
them.

\bibliographystyle{plain}
\bibliography{symfitch}

\end{document}